\newcolumntype{M}[1]{>{$}m{#1}<{$}}
\newcolumntype{C}[1]{>{$}c<{$}}
\newcolumntype{L}[1]{>{$}l<{$}}
\newcolumntype{R}[1]{>{$}r<{$}}
\newcommand{\clinit}{\ensuremath{\texttt{<clinit>}}}
\newcommand{\may}{\ensuremath{\mathit{May}}}
\newcommand{\must}{\ensuremath{\mathit{Must}}}
\newcommand{\wf}{\ensuremath{\mathit{Wf}}}
\newcommand{\staticC}{\texttt{\dots{}C\dots}}
\newcommand{\flow}{\mathit{flow}}
\newcommand{\intraflow}{\flow_{\rm intra}}
\newcommand{\interflow}{\flow_{\rm inter}}
\newcommand{\clinitflow}{\flow_{\rm clinit}}
\begin{document}

\begin{frontmatter}
  \title{Soundly Handling Static Fields:\\
    Issues, Semantics and Analysis}

  \author[CnrsIrisa]{Laurent
    Hubert\thanksref{email}\thanksref{regionbretagne}}
  \author[InriaIrisa]{David Pichardie\thanksref{email}}

  \address[CnrsIrisa]{CNRS, IRISA, Campus Beaulieu , F-35042 Rennes
    Cedex, France}%
  \address[InriaIrisa]{INRIA, Centre Rennes --- Bretagne Atlantique\\
    IRISA, Campus Beaulieu , F-35042 Rennes Cedex, France}

  \thanks[email]{Email:\texttt{first.last@irisa.fr}}
  \thanks[regionbretagne]{This work was supported in part by the
    Région Bretagne}

  \begin{abstract}
    %
    Although in most cases class initialization works as expected,
    some static fields may be read before being initialized, despite
    being initialized in their corresponding class initializer.
    We propose an analysis which compute, for each program point, the
    set of static fields that must have been initialized and discuss
    its soundness.
    We show that such an analysis can be directly applied to identify
    the static fields that may be read before being initialized and to
    improve the precision while preserving the soundness of a
    null-pointer analysis.
  \end{abstract}

  \begin{keyword}
    static analysis, Java, semantics, class initialization, control
    flow, verification.
  \end{keyword}
\end{frontmatter}

\section{Introduction}
\label{sec:introduction}

Program analyses often rely on the data manipulated by programs and
can therefore depend on their static fields.  Unlike instance fields,
static fields are unique to each class and one would like to benefit
from this uniqueness to infer precise information about their content.

When reading a variable, be it a local variable or a field, being sure
it has been initialized beforehand is a nice property.  Although the
Java bytecode ensures this property for local variables, it is not
ensured for static and instance fields which have default values.

Instance fields and static fields are not initialized the same way:
instance fields are usually initialized in a constructor which is
explicitly called whereas static fields are initialized in class
initializers which are implicitly and lazily invoked.  This makes the
control flow graph much less intuitive.

The contributions of this work are the followings.
\begin{itemize}
\item We recall that implicit lazy static field initialization makes
  the control flow graph hard to compute.
\item We identify some code examples that would need to be ruled out
  and some other examples that would need \emph{not} to be ruled out.
\item We propose a language to study the initialization of static
  fields.
\item We propose a formal analysis to infer an under-approximation of
  the set of static fields that have already been initialized for each
  program point.
\item We propose two possible applications for this analysis: a direct
  application is to identify potential bugs and another one is to
  improve the precision while keeping the correctness of a
  null-pointer analysis.
\end{itemize}

The rest of this paper is organized as follows.  We recall in
Sect.~\ref{sec:why-difficult} that the actual control flow graph which
includes the calls the class initializers it not intuitive and give
some examples.  We present in Sect.~\ref{sec:language} the syntax and
semantics of the language we have chosen to formalize our analysis.
Section~\ref{sec:analysis} then presents the analysis, first giving an
informal description and then its formal definition.  We then explain
in Sect.~\ref{sec:extensions} how the analysis can be extended to
handle other features of the Java bytecode language.  In
Sect.~\ref{sec:two-possible-uses}, we give two possible applications
of this analysis.  Finally, we discuss the related work in
Sect.~\ref{sec:related-work} and conclude in
Sect.~\ref{sec:conclusion}.

\section{Why Static Analysis of Static Fields is Difficult?}
\label{sec:why-difficult}

The analysis we herein present works at the bytecode level but, for
sake of simplicity, code examples are given in Java.  As this paper is
focused on static fields, all fields are assumed to be static unless
otherwise stated.

In Java, a field declaration may include its initial value, such as
\texttt{A.f} in Fig.~\ref{fig:depend_on_main}.  A field can also be
initialized in a special method called a \emph{class initializer},
which is identified in the Java source code with the \texttt{static}
keyword followed by no signature and a method body, such as in class
\texttt{B} in the same figure.
If a field is initialized with a compile-time constant expression, the
compiler from Java to bytecode may translate the initialization into a
\emph{field initializer} (cf.~\cite{lindholm99:jvm_spec},
Sect.~4.7.2), which is an attribute of the field.  At run time, the
field should be set to this value before running the class
initializer.
In-line initializations that have not been compiled as field
initializers are prepended in textual order to the class initializer,
named \clinit{} at the bytecode level.
For this analysis, we do not consider field initializers but focus on
class initializers as they introduce the main challenges.  Although
this simplification is sound, it is less precise and we explain how to
extend our analysis to handle field initializers in
Sect.~\ref{sec:initialization-order}.

The class initialization process is not explicitly handled by the
user: it is forbidden to explicitly called a \clinit{} method.
Instead, every access (read or write) to a field of a particular class
or the creation of an instance of that same class requires that the
JVM (Java Virtual Machine) has \emph{invoked} the class initializer
of that class.  This method can contain arbitrary code and may trigger
the initialization of other classes and so on.

The JVM specification~\cite{lindholm99:jvm_spec} requires class
initializers to be invoked \emph{lazily}.
This implies that the order in which classes are initialized depends
on the \emph{execution path}, it is therefore not decidable in
general.

\begin{figure}
  \begin{center}
\begin{lstlisting}
class A extends Object{static B f = new B();}
class B extends Object{
  static B g;
  static {
    g = A.f;
  }
}
\end{lstlisting}
  \end{center}
  \caption{Initial values can depend on foreign code: in this example,
    the main program should first use \texttt{B} for the
    initialization to start from \texttt{B} to avoid \texttt{B.g} to
    be \texttt{null}.}
  \label{fig:depend_on_main}
\end{figure}

The JVM specification also requires class initializers to be invoked
at most once.
This avoids infinite recursions in the case of circular dependencies
between classes, but it also implies that when reading a field it may
not contain yet its ``initial'' value.  For example, in
Fig.~\ref{fig:depend_on_main}, the class initializer of \texttt{A}
creates an instance of \texttt{B} and therefore requires that the
class initializer of \texttt{B} has been invoked.  The class
initializer of \texttt{B} reads a field of \texttt{A} and therefore
requires that the class initializer of \texttt{A} has been invoked.
\begin{itemize}
\item If \texttt{B.\clinit} is invoked before \texttt{A.\clinit}, then
  the read access to the field \texttt{A.f} triggers the invocation of
  \texttt{A.\clinit}.  Then, as \texttt{B.\clinit} has already been
  invoked, \texttt{A.\clinit} carries on normally and creates an
  instance of class \texttt{B}, store its reference to the field
  \texttt{A.f} and returns.  Back in \texttt{B.\clinit}, the field
  \texttt{A.f} is read and the reference to the new object is also
  affected to \texttt{B.g}.
\item If \texttt{A.\clinit} is invoked before \texttt{B.\clinit}, then
  before allocating a new instance of \texttt{B}, the JVM has to
  initialize the class \texttt{B} by calling \texttt{B.\clinit}.  In
  \texttt{B.\clinit}, the read access to \texttt{A.f} does not trigger
  the initializer of \texttt{A} because \texttt{A.\clinit} has already
  been started.  \texttt{B.\clinit} then reads \texttt{A.f}, which has
  not been initialized yet, \texttt{B.g} is therefore set to the
  default value of \texttt{A.f} which is the \texttt{null} constant.
\end{itemize}
This example shows that the order in which classes are initialized
modifies the semantics.
The issue shown in Fig.~\ref{fig:depend_on_main} is not limited to
reference fields.  In the example in
Fig.~\ref{fig:integer_depend_on_main}, depending on the initialization
order, \texttt{A.CST} will be either 0 or 5, while \texttt{B.SIZE}
will always be 5.
\begin{figure}
  \centering
\begin{lstlisting}
class A extends Object{
  public static int CST= B.SIZE;}
class B extends Object{
  public static int SIZE = A.CST+5;
}
\end{lstlisting}
  \caption{Integer initial values can also depend on foreign code}
  \label{fig:integer_depend_on_main}
\end{figure}


One could notice that those problems are related to the notion of
circular dependencies between classes and may think that circular
dependencies should be avoided.
Figure~\ref{fig:one-class} shows an example with a single class.  In
(a), \texttt{A.ALL} is read in the constructor before it has been
initialized and it leads to a \texttt{NullPointerException}. (b) is
the correct version, where the initializations of \texttt{ALL} and
\texttt{EMPTY} have been switched.  This example is an extract of
\texttt{java.lang.Character\$UnicodeBlock} of Sun's Java Runtime
Environment (JRE) that we have simplified: we want the analysis to
handle such cases.
If we consider that \texttt{A} depend on itself, then we forbid way to
much programs.  If we do not consider that \texttt{A} depend on itself
and we do not reject (a) then the analysis is incorrect.  We therefore
cannot rely on circular dependencies between classes.

\begin{figure}\normalsize
  \begin{minipage}{.49\linewidth}
\begin{lstlisting}
class A extends Object{
  static A EMPTY=new A("");
  static HashMap ALL
            =new HashMap();

  String name;
  public A(String name){
    this.name = name;
    ALL.add(name,this);
  }
}
\end{lstlisting}
    \begin{center} \footnotesize (a) An uninitialized field read
      \\leads to a \texttt{NullPointerException}
    \end{center}
  \end{minipage}
  \begin{minipage}{.48\linewidth}
\begin{lstlisting}
class A extends Object{
  static HashMap ALL
            =new HashMap();
  static A EMPTY=new A("");

  String name;
  public A(String name){
    this.name = name;
    ALL.add(name,this);
  }
}
\end{lstlisting}
    \begin{center} \footnotesize (b) No uninitialized field is read
    \end{center}
  \end{minipage}

  \caption{The issue can arise with a single class}
  \label{fig:one-class}
\end{figure}

\section{The Language}
\label{sec:language}
In this section we present the program model we consider in this work.
This model is a high level description of the bytecode program that
discards the features that are not relevant to the specific problem of
static field initialization.

\subsection{Syntax}
\label{sec:syntax}

\newcommand{\ppt}{\mathbb{P}}
\newcommand{\first}{\mathit{first}}
\newcommand{\last}{\mathit{last}}
\newcommand{\fields}{\mathbb{F}}
\newcommand{\class}{\mathbb{C}}
\newcommand{\method}{\mathbb{M}}
\newcommand{\instrAt}{\mathit{instr}}
\newcommand{\any}{\mathtt{any}}
\newcommand{\getS}{\mathtt{getstatic}}
\newcommand{\putS}{\mathtt{put}}
\newcommand{\invoke}{\mathtt{invoke}}
\newcommand{\return}{\mathtt{return}}
\newcommand{\pto}{\rightharpoonup}

We assume a set $\ppt$ of program points, a set $\fields$ of field
names, a set $\class$ of class names and a set $\method$ of method
names.  For each method $m$, we note $m.\first$ the first program point
of the method $m$.  For convenience, we associate to each method a
distinct program point $m.\last$ which models the output point of the
method.  For each class $C$ we note $C.\clinit$ the name of the class
initializer of $C$.  We only consider four kinds of instructions.
\begin{itemize}
\item $\putS(f)$ updates a field $f\in\fields$.
\item $\invoke$ calls a method (we do not mention the name of the
  target method because it would be redundant with the $\interflow$
  information described below).
\item $\return$ returns from a method.
\item $\any$ models any other intra-procedural instruction that does
  not affect static fields.
\end{itemize}
As the semantics presented below will demonstrate, any instruction of
the standard sequential Java bytecode can be represented by one of
these instructions.

The program model we consider is based on control flow relations that
must have been computed by some standard control flow analysis.
\begin{definition}\sloppy
  A program is a 5-tuple
  $(m_0,\,\instrAt,\,\intraflow,\,\interflow,\,\clinitflow)$ where:
  \begin{itemize}
  \item $m_0\in\method$ is the method where the program execution
    starts;
  \item $\instrAt\in\ppt\pto\{\putS(f),~\invoke,\return,\any\}$ is a
    partial function that associates program points to instructions;
  \item $\intraflow\subseteq \ppt\times\ppt$ is the set of
    intra-procedural edges;
  \item $\interflow\subseteq \ppt\times\method$ is the set of
    inter-procedural edges, which can capture dynamic method calls;
  \item $\clinitflow\in \ppt\pto\class$ is the set of initialization
    edges which forms a partial function since an instruction may only
    refer to one class;
  \end{itemize}
  and such that $\instrAt$ and $\intraflow$ satisfy the following
  property:
  \begin{quote}
    For any method $m$, for any program point $l\in\ppt$ that is
    reachable from $m.\first$ in the intra-procedural graph given by
    $\intraflow$, and such that $\instrAt(l)=\return$, $(l,m.\last)$
    belongs to $\intraflow$.
  \end{quote}
\end{definition}
In practice, $\clinitflow$ will contain all the pairs $(l,C)$ of a
bytecode program such that the instruction found at program point $l$
is of the form $\mathtt{new}~C$, $\mathtt{putstatic}~C.f$,
$\mathtt{getstatic}~C.f$ or $\mathtt{invokestatic}~C.m$ (see
Section~2.17.4 of \cite{lindholm99:jvm_spec}).\footnote{%
  To be completely correct we also need to add an edge from the
  beginning of the \texttt{static void main} method to the initializer
  of its class.  We can also handle correctly superclass and interface
  initialization without deep modification of the current
  formalization.}



Figure~\ref{fig:example} in Sect.~\ref{sec:formal-specification}
presents an example of program with its three control flow
relations.  In this program, the main method $m_0$ contains two
distinct paths that lead to the call of a method $m$.  In the first
one, the class $\mathtt{A}$ is initialized first and its initializer
triggers the initialization of $\mathtt{B}$.  In the second path,
$\mathtt{A}$ is not initialized but $\mathtt{B}$ is.
$\mathtt{A}.\clinit$ is potentially called at exit point $8$ but since
$8$ is only reachable after a first call to $\mathtt{A}.\clinit$, this
initialization edge is never taken.

\subsection{Semantics}
\label{sec:semantics}

\newcommand{\Pow}{\mathcal{P}}
\newcommand{\eqdef}{\,\stackrel{\textrm{\scriptsize def}}{=}\,}
\newcommand{\Ref}{\mathrm{Location}}
\newcommand{\Value}{\mathrm{Value}}
\newcommand{\Local}{\mathrm{Local}}
\newcommand{\OpStack}{\mathrm{OpStack}}
\newcommand{\Dynamics}{\mathrm{Dynamics}}
\newcommand{\Heap}{\mathrm{Heap}}
\newcommand{\Static}{\mathrm{Static}}
\newcommand{\Frame}{\mathrm{Frame}}
\newcommand{\State}{\mathrm{State}}
\newcommand{\IntraState}{\State_{\mathrm{intra}}}
\newcommand{\FinalState}{\State_{\mathrm{final}}}
\newcommand{\History}{\mathrm{History}}
\newcommand{\CallStack}{\mathrm{Callstack}}
\newcommand{\st}[1]{<\!\!#1\!\!>}
\newcommand{\fr}{\mathit{fr}} \newcommand{\cs}{\mathit{cs}}
\newcommand{\hist}{h}
\newcommand{\stt}{\mathit{st}} \newcommand{\os}{\mathit{os}}
\newcommand{\cons}{\!::\!}  \newcommand{\undef}{\Omega}
\newcommand{\nil}{\varepsilon} \newcommand{\marked}[1]{\llparenthesis
  #1 \rrparenthesis}

The analysis we consider does not take into account the content of
heaps, local variables or operand stacks.  To simplify the
presentation, we hence choose an abstract semantics which is a
conservative abstraction of the concrete standard semantics and does
not explicitly manipulate these domains.

The abstract domains the semantics manipulates are presented below.
$$
\begin{array}{rclcl}
  v    &\in& \Value &      & (\mathit{abstract}) \\
  s    &\in& \Static&\eqdef& \fields \to \Value+\{\undef\}\\
  \hist  &\in& \History&\eqdef& \Pow(\class) \\
  \cs &\in& \CallStack &\eqdef& (\{0,1\}\times\ppt)^\star\\
  \st{l,\cs,s,\hist}&\in& \State & \eqdef & \ppt\times\CallStack\times\Static\times\History\\ 
\end{array}
$$
A field contains either a value or a default value represented by the
symbol $\undef$.  Since a class initializer cannot be called twice in a
same execution, we need to remember the set of classes whose
initialization has been started (but not necessarily ended).  This
is the purpose of the element $h\in\History$.  Our language is given a
small-step operational semantics with states of the form
$\st{l,\cs,s,\hist}$, where the label $l$ uniquely identifies the
current program point, $\cs$ is a call stack that keeps track of the
program points where method calls have occurred, $s$ associates to
each field its value or $\undef$ and $\hist$ is the history of class
initializer calls.  Each program point $l$ of a call stack is tagged
with a Boolean which indicates whether the call in $l$ was a call to a
class initializer (the element of the stack is then noted
$\marked{l}$) or was a standard method call (simply noted ${l}$).

\newcommand{\Reach}[2]{\mathit{Reach}(#1,#2)}
\newcommand{\stepi}{\,\to_1\,} \newcommand{\stepii}{\,\to_2\,}
\newcommand{\step}{\,\to\,} \newcommand{\dom}{\operatorname{dom}}
\newcommand{\stepstar}{\,\to^\star\,}
\newcommand{\NeedInit}{\mathit{NeedInit}}
\newcommand{\ia}[1]{\llbracket{#1}\rrbracket}
\newcommand{\init}{\mathrm{init}} \newcommand{\inA}{\mathrm{in}}
\newcommand{\callA}{\mathrm{call}} \newcommand{\outA}{\mathrm{out}}

The small-step relation $\step \subseteq \State\times\State$ is given
in Fig.~\ref{fig:sem} (we left implicit the program
$(m_0,\instrAt,\intraflow,\interflow,\clinitflow)$ that we consider).
It is based on the relation
$\NeedInit\subseteq\ppt\times\class\times\History$ defined by
$$
\NeedInit(l,C,\hist)\quad \eqdef \quad \clinitflow(l)=C~ \land~
C\not\in\hist
$$
which means that the class initializer of class $C$ must be called at
program point $l$ if and only if there is a corresponding edge in
$\clinitflow$ and $C.\clinit$ has not been called yet (\emph{i.e.}
$C\not\in\hist$).

\begin{figure}
$$
\begin{array}[c]{c}
  \inference%
  {\NeedInit(l,C,\hist)}%
  {\st{l,\cs,s,\hist} \step \st{C.\clinit.\first,\marked{l}::\cs,s,\hist\cup\{C\}}}\\[2.5ex]
  \inference%
  {\st{l,\cs,s,\hist} \stepi \st{l',\cs',s',\hist'} & \forall C,\ \neg\NeedInit(l,C,\hist)}%
  {\st{l,\cs,s,\hist} \step \st{l',\cs',s',\hist'}}\\[5ex]
  \inference%
  {\instrAt(l)=\putS(f)  & \intraflow(l,l') & v\in\Value}
  {\st{l,\cs,s,\hist} \stepi \st{l',\cs,s[f\mapsto v],\hist}}\\[2ex]
  \inference%
  {\instrAt(l)=\any  & \intraflow(l,l')}%
  {\st{l,\cs,s,\hist} \stepi \st{l',\cs,s,\hist}}~
  \inference%
  {\instrAt(l)=\invoke & \interflow(l,m)}%
  {\st{l,\cs,s,\hist} \stepi \st{m.\first,l::\cs,s,\hist}}\\[2ex]
  \inference%
  {\instrAt(l)=\return &\!\!\! \intraflow(l',l'')}%
  {\st{l,l'::\cs,s,\hist} \stepi \st{l'',\cs,s,\hist}}~
  \inference%
  {\instrAt(l)=\return &\!\!\! \st{l',\cs,s,\hist} \stepi \stt'}%
  {\st{l,\marked{l'}::\cs,s,\hist} \stepi \stt'}\\[2ex]
\end{array}
$$  
\caption{Operational semantics}
\label{fig:sem}
\end{figure}

The relation $\to$ is defined by two rules.  In the first one, the
class initializer of class $C$ needs to be called.  We hence jump to
the first point of $C.\clinit$, push on the call stack the previous
program point (marked with the flag $\marked{\cdot}$) and record $C$
in the history $\hist$.  In the second rule, there is no need to
initialize a class, hence we simply use the standard semantic of the
current instruction, given by the relation
$\stepi\subseteq\State\times\State$.

The relation $\stepi$ is defined by five rules.
The first one corresponds to a field update $\putS(f)$: an arbitrary
value $v$ is stored in field $f$.
The second rule illustrates that the instruction $\any$ does not affect
the visible elements of the state.
For a method call (third rule), the current point is pushed on the
call stack and the control is transferred to (one of) the target(s) of
the inter-procedural edge.
At last, the instruction $\return$ requires two rules.  In the first
case, a standard method call, the transfer comes back to the
intra-procedural successor of the caller.  In the second case, a class
initializer call, we have finished the initialization
and we must now use the standard semantic $\stepi$ of the pending
instruction in program point $l$.

We end this section with the formal definition of the set of reachable
states during a program execution.  An execution starts in the main
method $m_0$ with an empty call stack, an empty historic and with all
fields associated to the default value $\undef$.

\begin{definition}[Reachable States]
  The set of reachable states of a program
  $p=(m_0,\instrAt,\intraflow,\interflow,\clinitflow)$ is defined by
 $$\ia{p} = \left\{\ \st{i,\cs,s,\hist}\ \mid\ 
   \st{m_0.\first,\nil,\lambda f.\undef,\emptyset} \stepstar
   \st{i,\cs,s,\hist}\ \right\}$$
\end{definition}

\section{A Must-Have-Been-Initialized Data Flow Analysis}
\label{sec:analysis}

In this section we present a sound data flow analysis that allows to
prove a static field has already been initialized at a particular
program point.  We first give an informal presentation of the
analysis, then present its formal definition and we finish with the
statement of a soundness theorem.

\subsection{Informal presentation}
\label{sec:informal-presentation}
For each program point we want to know as precisely as possible, which
fields we are sure we have initialized.  Since fields are generally
initialized in class initializers, we need an inter-procedural
analysis that infers the set of fields $\wf$ that \emph{must} have
been initialized at the end of each method.  Hence at each program
point $l$ where a method is called, be it a class initializer or
another method, we will use this information to improve our knowledge
about initialized fields.

However, in the case of a call to a class initializer, we need to be
sure the class initializer will be effectively executed if we want to
safely use such an information.  Indeed, at execution time, when we
reach a point $l$ with an initialization edge to a class $C$, despite
$\clinitflow(l)=C$, two exclusive cases may happen:
\begin{enumerate}
\item $C.\clinit$ has not been called yet: $C.\clinit$ is immediately
  called.
\item $C.\clinit$ has already been called (and may still be in
  progress): $C.\clinit$ will not be called a second time.
\end{enumerate}
Using the initialization information given by $C.\clinit$ is safe only
in case (i), \emph{i.e.} the control flow information in $\clinitflow$
is not precise enough.  To detect case {(i)}, we keep track in a
flow-sensitive manner of the class initializer that \emph{may} have
been called during all execution reaching a given program point.  We
denote by $\may$ this set.  Here, if $C$ is not in $\may$, we are sure
to be in case {(i)}.
%
$\may$ is computed by gathering, in a flow sensitive way, all classes
that may be initialized starting from the main method.  Implicit calls
to class initializer need to be taken in account, but the smaller
$\may$ is, the better.


\begin{figure}
  \centering
  \begin{minipage}{.8\linewidth}
\begin{lstlisting}[numbers=left]
class A{static int f = 1;}
class B{static int g = A.f;}
class C{
  public static void main(String[] args){
    ... = B.g;
    ... = A.f;
    ... = B.g;
  }
}
\end{lstlisting}
  \end{minipage}
  \caption{Motivating the \must{} set}
  \label{fig:motivating-must-set}
\end{figure}

For the simplicity's sake we consider in this work a
context-insensitive analysis where for each method, all its calling
contexts are merged at its entry point.
Consider the program example given in
Fig.~\ref{fig:motivating-must-set}.  Before line 5, \may{} only
contains $\mathtt{C}$, the class of the \texttt{main} method.  There
is an implicit flow from line 5 to the class initializer of
$\mathtt{B}$.  At the beginning of the class initializer of
$\mathtt{B}$, \may{} equals to $\{\mathtt{B},\mathtt{C}\}$.  We
compute the set of fields initialized by \texttt{A.\clinit}, which is
$\{\mathtt{A.f}\}$.  As $\mathtt{A}$ is not in \may{} at the beginning
of \texttt{B.\clinit}, we can assume the class initializer will be
fully executed before the actual read to \texttt{A.f} occurs, so it is
a safe read.  However, when we carry on line 7, the \may{} set
contains $\mathtt{A}$, $\mathtt{B}$ and $\mathtt{C}$.  If we flow this
information to \texttt{B.\clinit}, then the merged calling context of
\texttt{B.\clinit} is now $\{\mathtt{A},\mathtt{B},\mathtt{C}\}$,
which makes impossible to assume anymore that \texttt{A.\clinit} is
called at line 2.
To avoid such an imprecision, we try to propagate as few calling
context as possible to class initializers by computing in a
flow-sensitive manner a second set of class whose initializer
\emph{must} have already been called in all execution reaching a given
program point.  We denote by $\must$ this set.  Each time we encounter
an initialization edge for a class $\mathtt{C}$, we add $\mathtt{C}$
to $\must$ since $\mathtt{C}.\clinit$ is either called at this point,
or has already been called before.  If $\mathtt{C}\in\must$ before an
initialization edge for $\mathtt{C}$, we are sure $\mathtt{C}.\clinit$
will not be called at this point and we can avoid to propagate a
useless calling context to $\mathtt{C}.\clinit$.

To sum up, our analysis manipulates, in a flow sensitive manner, three
sets $\may$, $\must$ and $\wf$. $\may$ and $\must$ correspond to
a control flow analysis that allows to refine the
initialization graph given by $\clinitflow$.
The more precise control flow graph allows a finer tracking of field
initialization and therefore a more precise $\wf$.

\subsection{Formal specification}
\label{sec:formal-specification}

In this part we consider a given program $p=
(m_0,\instrAt,\intraflow,\interflow,\clinitflow)$.  We note
$\Pow_p(\class)$ (resp. $\Pow_p(\fields)$) the finite set of classes
(reps. fields) that appears in $p$.  For each program point $l\in\ppt$,
we compute before and after the current point three sets of data
$(\may,\must,\wf)\in\Pow_p(\class)\times\Pow_p(\class)\times\Pow_p(\fields)$.
Since $\may$ is a \emph{may} information, and $\must$ and $\wf$ are
\emph{must} information, the underlying lattice of the data flow
analysis is given by the following definition.
\begin{definition}[Analysis lattice]
  The analysis lattice is 
  $(A^\sharp,\sqsubseteq,\sqcup,\sqcap,\bot,\top)$ where:
  \begin{itemize}
  \item $A^\sharp =
    \Pow_p(\class)\times\Pow_p(\class)\times\Pow_p(\fields)$.
  \item $\bot=(\emptyset,\Pow_p(\class),\Pow_p(\fields))$.
  \item $\top=(\Pow_p(\class),\emptyset,\emptyset)$.
  \item for all $(\may_1,\must_1,\wf_1)$ and $(\may_2,\must_2,\wf_2)$
    in $A^\sharp$,\vspace*{-3ex}

    $$
    \begin{array}[c]{l}
      (\may_1,\must_1,\wf_1)\sqsubseteq(\may_2,\must_2,\wf_2)~\mathrm{iff}\\
      ~~~~~~~~~~~~~~~~~~~~
      \may_1 \subseteq \may_2,\ \must_1 \supseteq \must_2~\mathrm{and}~\wf_1 \supseteq \wf_2\\
    \end{array}
    $$\vspace*{-2ex}
    $$
    \begin{array}[c]{l}
      (\may_1,\must_1,\wf_1)\sqcup(\may_2,\must_2,\wf_2) = \\
      ~~~~~~~~~~~~~~~~~~~~
      (\may_1 \cup \may_2, \must_1 \cap \must_2,\wf_1 \cap \wf_2)\\
    \end{array}
    $$\vspace*{-2ex}
    $$
    \begin{array}[c]{l}
      (\may_1,\must_1,\wf_1)\sqcap(\may_2,\must_2,\wf_2) = \\
      ~~~~~~~~~~~~~~~~~~~~
      (\may_1 \cap \may_2, \must_1 \cup \must_2,\wf_1 \cup \wf_2)
    \end{array}
   $$
 \end{itemize}
\end{definition}

Each element in $A^\sharp$ expresses properties on fields and on an
initialization historic.  This is formalized by the following
correctness relation.
\begin{definition}[Correctness relation]
  $(\may,\must,\wf)$ is a correct approximation of
  $(s,\hist)\in\Static\times\History$, written
  $(\may,\must,\wf)\sim(s,\hist)$ iff:
 $$ \must \subseteq \hist \subseteq \may ~~\mathrm{and}~~\wf \subseteq \{\ f\in\fields \mid s(f)\not=\undef\ \}$$
\end{definition}
This relation expresses that
\begin{enumerate}
\item $\may$ contains all the classes for which we may have called the
  \clinit{} method since the beginning of the program (but it may not
  be finished yet).
\item $\must$ contains all the classes for which we must have called
  the \clinit{} method since the beginning of the program (but it may
  not be finished yet neither).
\item $\wf$ contains all the fields for which we are sure they have
  been written at least once.
\end{enumerate}

The analysis is then specified as a data flow problem.
\begin{definition}[Data flow solution]
  A \emph{Data flow solution} of the Must-Have-Been-Initialized
  analysis is any couple of maps $A_\inA,A_\outA\in\ppt\to A^\sharp$
  that satisfies the set of equations presented in
  Fig.~\ref{fig:analysis}, for all program point $l$ of the program
  $p$.
\end{definition}

\begin{figure}
  $$
  \begin{array}{rcl}
    A_\inA(l) &=& A_0(l) \sqcup A_\first(l) \sqcup \bigsqcup \{A_\outA(l')\mid \intraflow(l',l)\} \\
    A_\outA(l) &=\\
    \multicolumn{3}{r}{
      \makebox[.99\linewidth][r]{$\left\{
          \begin{array}[c]{ll}
            F_\callA\left(F^\init(l,A_\inA(l)),\bigsqcup \{A_\inA(m.\last)\mid \interflow(l,m)\}\right) & \mathrm{if}~ \instrAt(l) = \invoke \\
            F_{\instrAt(l)} (F^\init(l,A_\inA(l))) & \mathrm{otherwise}
          \end{array}\right.$}}
  \end{array}
  $$
  where
  $$
  \begin{array}{rcl}
    A_0(l) &=&     \left\{
      \begin{array}{ll}
        (\emptyset,\emptyset,\emptyset) &\mathrm{if}~l=m_0.\first\\
        \bot & \mathrm{otherwise}\\
      \end{array}\right. \\
    A_\first(l) &=&  \left\{
      \begin{array}{ll}
        \bigsqcup \left\{ F^\init_\callA (C,A_\inA(l')) \mid \clinitflow(l')=C\right\} &\mathrm{if}~l=C.\clinit.\first\\
        \bigsqcup \left\{ F^\init(l',A_\inA(l')) \mid \interflow(l',m)\right\} &\mathrm{if}~l=m.\first\\
        \bot & \mathrm{otherwise}
      \end{array}\right. \\
  \end{array}
  $$
  and $F_{\return}$, $F_{\any}$, $F_{\putS(f)}\in A^\sharp\to A^\sharp$, $F_\callA\in A^\sharp\times A^\sharp\to A^\sharp$,
  $F^\init_\callA\in \class\times A^\sharp\to A^\sharp$ and $F^\init\in \ppt\times A^\sharp\to A^\sharp$ are transfer functions defined by:
  \begin{eqnarray*}
    F_{\return} (\may,\must,\wf) &=& 
    F_{\any} (\may,\must,\wf) = (\may,\must,\wf) \\
    F_{\putS(f)} (\may,\must,\wf) &=& (\may,\must,\wf\cup\{f\}) \\
  \end{eqnarray*}\vspace*{-5ex}
  $$
  \begin{array}{l}
    F_\callA\left((\may_1,\must_1,\wf_1), (\may_2,\must_2,\wf_2)\right) =\\
    ~~~~~~~~~~~~~~~~~~~~~~~~~~~~~~~~~ (\may_2,\must_1\cup\must_2,\wf_1\cup\wf_2)
  \end{array}
  $$
  \begin{eqnarray*}
    F^\init_\callA(C,(\may,\must,\wf)) &=&
    \left\{
      \begin{array}{ll}
        \bot & \mathrm{if}~ C\in\must \\
        (\may\cup\{C\},\must\cup\{C\},\wf) & \mathrm{otherwise}
      \end{array}\right. \\
  \end{eqnarray*}
  $$
  \begin{array}{M{.99\linewidth}}
    F^\init(l,a) = F^\init(l,(\may,\must,\wf)) =\\
    \multicolumn{1}{r}{\left\{
        \begin{array}{ll}
          F_\callA(a, A_\inA(C.\clinit.\last)) & \mathrm{if}~\clinitflow(l)=C~\mathrm{and}~C\not\in\may \\
          F^\init_\callA(C,a) \sqcup F_\callA(a, A_\inA(C.\clinit.\last))\ \ & \mathrm{if}~\clinitflow(l)=C\\[-.5ex]
          & \multicolumn{1}{r}{\mathrm{~and~}C\in\may~\mathrm{and}~C\not\in\must}\\[-.5ex]
          a & \mathrm{otherwise}
        \end{array}\right.} \\
  \end{array}
  $$
  \caption{Data flow analysis}
  \label{fig:analysis}
\end{figure}

In this equation system, $A_\inA(l)$ is the abstract union of three
kinds of data flow information:
(i) $A_0(l)$ gives the abstraction of the initial states if $l$ is the
starting point of the main method $m_0$;
(ii) $A_\first(l)$ is the abstract union of all calling contexts that
may be transferred to $l$ if it is the starting point of a method
$m$.  We distinguish two cases, depending on whether $m$ is the class initializer
of a class $C$.  If it is, incoming calling contexts
are transformed with $F_\callA^\init$ which filters unfeasible calling
edges with $\must$ and adds $C$ to $\may$ and $\must$ otherwise. 
Otherwise, incoming calling contexts are transformed with
$F^\init$ (described below) which take into account the potential
class initialization that may have been performed before the call.
(iii) At last, we merge all incoming data flows from predecessors in
the intra-procedural graph.

The equation on $A_\outA(l)$ distinguishes two cases, depending on
$\instrAt(l)$ is a method call or not.  If it is, we merge all data
flows from the end of the potentially called methods and combine them,
using $F_\callA$ described below, with the data flows facts
$F^\init(l,A_\inA(l))$ that is true just before the call.  Otherwise,
we transfer the data flow $A_\inA(l)$, found at entry of the current
instruction with $F^\init$ and then $F_{\instrAt(l)}$.  While $F^\init$
handles potential class initialization that may have been performed
before the instruction, $F_{\instrAt(l)}$ simply handles the effect of
the instruction $\instrAt(l)$ in a straightforward manner.

The transfer function $F^\init$ is defined with tree distinct cases.
(i) In the first case, we are sure the class initializer $C.\clinit$
will be called because it has never been called before.  We can hence
use safely the last data flow of $C.\clinit$ but we combine it with
$a$ using the operator $F_\callA$ described below.
(ii) In the second case, we are sure that no class initializer will be
called, either because there is no initialization edge at all, or
there is one for a class $C$ but we know that $C.\clinit$ has already
been called.
(iii) In the last case, the two previous cases may happen so we merge
the corresponding data flows.

At last, $F_\callA$ is an operator which combines dataflows about
calling contexts and calling
returns. 
It allows to recover some \emph{must} information that may have been
discarded during the method call because of spurious calling
contexts.  It is based on the monotony of $\must$ and $\wf$: these sets
are under-approximations of initialization history and initialized
fields but since such sets only increase during execution a correct
under-approximation $(\must,\wf)$ at a point $l$ is still a correct
approximation at every point reachable from $l$.

\begin{figure}
  \centering
  \begin{minipage}[b]{6.8cm}
    \pgfimage[width=6.8cm]{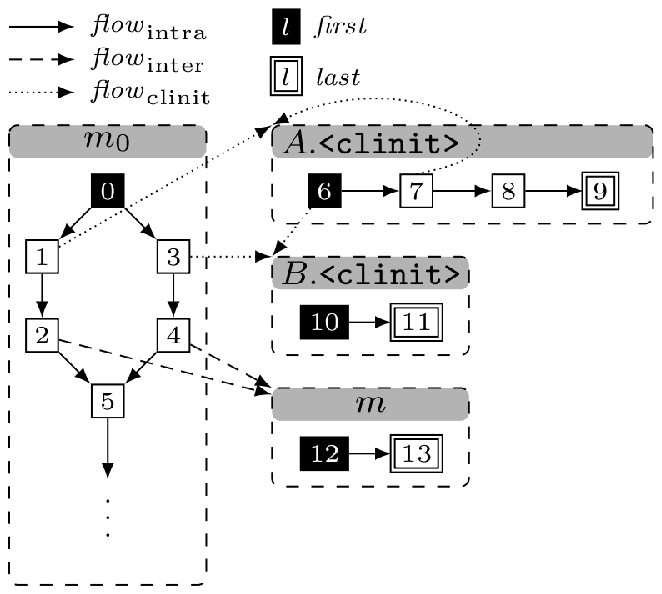}
    \begin{center} \footnotesize
      (a) A program example
    \end{center}
  \end{minipage}
  \begin{minipage}[b]{6.7cm}\tiny
    $$
    \begin{array}[t]{|c|l|ccc|ccc|}\hline
      \multirow{2}{*}{$l$} & \multirow{2}{*}{$\instrAt(l)$} & \multicolumn{3}{c|}{A_\inA(l)} & \multicolumn{3}{c|}{A_\outA(l)} \\
      & & \may & \must & \wf & \may & \must & \wf \\\hline
      0 & \any    & \emptyset & \emptyset& \emptyset& \emptyset & \emptyset & \emptyset \\
      1 & \any    & \emptyset & \emptyset & \emptyset & \{A,B\} & \{A,B\} & \emptyset\\
      2 & \invoke & \{A,B\} & \{A,B\} & \{f\} & \{A,B\} & \{A,B\} & \{f\}\\
      3 & \any    & \emptyset & \emptyset & \emptyset & \{A,B\} & \{B\} & \emptyset\\
      4 & \invoke & \{A,B\} & \{B\} & \emptyset & \{A,B\} & \{B\} & \emptyset\\
      5 & \any    & \{A,B\} & \{B\} & \emptyset & \{A,B\} & \{B\} & \emptyset\\\hline
      6 & \putS(f)& \{A\} & \{A\} & \emptyset & \{A,B\} & \{A,B\} & \{f\}\\
      7 & \any    & \{A,B\} & \{A,B\} &\{f\}& \{A,B\} & \{A,B\} & \{f\}\\
      8 & \return & \{A,B\} & \{A,B\} &\{f\}& \{A,B\} & \{A,B\} & \{f\}\\
      9 &         & \{A,B\} & \{A,B\} & \{f\}& \{A,B\} & \{A,B\}& \{f\} \\\hline
      10& \return & \{A,B\} & \{B\} & \emptyset & \{A,B\} & \{B\} & \emptyset \\
      11&         & \{A,B\} & \{B\} & \emptyset & \{A,B\} & \{B\} & \emptyset\\\hline
      12& \return & \{A,B\} & \{B\}& \emptyset& \{A,B\} & \{B\} & \emptyset\\
      13&         & \{A,B\} & \{B\}& \emptyset& \{A,B\} & \{B\} & \emptyset\\\hline
    \end{array}
    $$
    \begin{center} \footnotesize
      (b) Least dataflow solution of the program example
    \end{center}
  \end{minipage}
  \caption{Program and analysis example}
  \label{fig:example}
\end{figure}

The program example presented is Fig.~\ref{fig:example} is given with
the least solution of its corresponding dataflow problem.  In this
example, $A.\clinit$ has two potential callers in $1$ and $7$ but we
don't propagate the dataflow facts from $7$ to $6$ because we know
that $A.\clinit$ has already been called at this point, thanks to
$\must$.  At point $2$, the method $m$ is called but we don't propagate
in $A_\outA(2)$ the exact values found in $A_\inA(m.\last)$ because we
would lose the fact that $A\in\must$ before the call.  That is why we
combine $A_\inA(2)$ and $A_\inA(m.\last)$ with $F_\callA$ in order to
refine the must information $\must$ and $\wf$.



\begin{theorem}[Computability]
  The least data flow solution for the partial order $\sqsubseteq$ is
  computable by the standard fixpoint iteration techniques.
\end{theorem}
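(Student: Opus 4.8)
The plan is to recognise this as a direct instance of the Knaster--Tarski and Kleene fixpoint theorems, so that the whole argument rests on two facts: the abstract domain is a complete lattice of finite height, and the equation system of Fig.~\ref{fig:analysis} defines a monotone operator. First I would check that $A^\sharp = \Pow_p(\class)\times\Pow_p(\class)\times\Pow_p(\fields)$ is a finite complete lattice for $\sqsubseteq$. Because $p$ is a finite program, the three powersets are finite, hence $A^\sharp$ is a finite product of the (partly order-reversed) powerset lattices, with $\sqcup$, $\sqcap$, $\bot$ and $\top$ exactly as defined; finiteness immediately yields completeness and the ascending chain condition, and bounds the length of any strictly increasing chain by $|\class|+|\class|+|\fields|$ (times the number of program points, once we globalise).

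Second, I would package the two families of equations into a single endofunction. Let $\ppt_p$ be the finite set of program points occurring in $p$ and take the global lattice $L = (A^\sharp)^{\ppt_p}\times(A^\sharp)^{\ppt_p}$, ordered componentwise, carrying the pair $(A_\inA,A_\outA)$; $L$ is again finite and complete. The right-hand sides of Fig.~\ref{fig:analysis} then define an operator $\Phi:L\to L$ whose fixpoints are exactly the data flow solutions, so that the least solution is the least fixpoint of $\Phi$.

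Third --- and this is the heart of the matter --- I would prove that $\Phi$ is monotone. Since $\sqcup$, the indexed join $\bigsqcup$, projections, constant maps and composition all preserve monotonicity, it suffices to check each elementary block. The functions $F_\return$ and $F_\any$ are the identity, $F_{\putS(f)}$ only enlarges $\wf$, and $F_\callA$ is monotone in each argument by a one-line check on every component (its $\may$-output is the second argument's $\may$, and its $\must$/$\wf$ outputs are unions, all oriented with $\sqsubseteq$). The delicate blocks are the case-defined $F^\init_\callA$ and $F^\init$, whose guards test $C\in\must$ and $C\in\may$. Here I would argue by a case analysis over the branches into which an ordered pair $a_1\sqsubseteq a_2$ may fall, using that ``$C\in\may$'' is upward closed and ``$C\in\must$'' is downward closed for $\sqsubseteq$, and that any branch returning $\bot$ is dominated by everything; for $F^\init_\callA$ this already yields monotonicity cleanly.

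The main obstacle I anticipate is the monotonicity of $F^\init$ at the boundary between its first branch (guard $C\notin\may$, whose result inherits the $\may$-component of $A_\inA(C.\clinit.\last)$) and its third branch (guard $C\in\must$, returning $a$ unchanged): on arbitrary elements of $A^\sharp$ a crude check fails, since nothing forces that $\may$-component to lie below $\may(a_2)$. I expect to close this by restricting $\Phi$ to the states satisfying the invariant $\must\subseteq\may$ --- the abstract counterpart of the relation $\must\subseteq\hist\subseteq\may$ from the correctness definition --- which makes exactly this branch transition unreachable (under the invariant, $C\notin\may$ forces $C\notin\must$, so an input in the first branch cannot lie below one in the third). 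I would either carry this invariant as part of the domain on which monotonicity is established, or prove it directly along the Kleene chain $\bot\sqsubseteq\Phi(\bot)\sqsubseteq\Phi^2(\bot)\sqsubseteq\cdots$. Once monotonicity holds, Knaster--Tarski gives the least fixpoint and, by finiteness of $L$, the iteration $\Phi^n(\bot)$ stabilises after finitely many steps on that least data flow solution, which is the claim.
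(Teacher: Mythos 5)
Your overall strategy coincides with the paper's own proof, which is exactly the three facts you start from: the equations are monotone, the number of program points is finite, and $(A^\sharp,\sqsubseteq,\sqcup,\sqcap)$ is a finite lattice, packaged through Knaster--Tarski/Kleene on the product lattice of all program points. Where you go beyond the paper is in actually attempting the monotonicity check, and your diagnosis there is correct and important: $F^\init$ is \emph{not} monotone on all of $A^\sharp$. Concretely, take $\clinitflow(l)=C$, $a_1=(\emptyset,\{C\},\emptyset)$, $a_2=(\{C\},\{C\},\emptyset)$ and a fixed $b=A_\inA(C.\clinit.\last)=(\{C,D\},\{C\},\emptyset)$: then $a_1\sqsubseteq a_2$, but $F^\init(l,a_1)=F_\callA(a_1,b)$ has $\may$-component $\{C,D\}$ while $F^\init(l,a_2)=a_2$ has $\may$-component $\{C\}$, so $F^\init(l,a_1)\not\sqsubseteq F^\init(l,a_2)$. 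This is precisely your branch-$1$-to-branch-$3$ transition, and your observation that all other branch combinations are monotone is also correct. So the paper's one-line assertion that ``each equation is monotone'' is, taken literally, false, and your proposal is the only one of the two that notices a real proof obligation here.

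However, the repair you sketch does not close the gap, because the invariant $\must\subseteq\may$ is violated by the bottom element itself: $\bot=(\emptyset,\Pow_p(\class),\Pow_p(\fields))$ has $\must=\Pow_p(\class)\not\subseteq\emptyset$. This is fatal to both variants of your fix. First, the set $S=\{a\in A^\sharp\mid \must(a)\subseteq\may(a)\}$ has no least element and is not closed under $\sqcap$ (e.g.\ $(\{C\},\{C\},\emptyset)\sqcap(\{D\},\{D\},\emptyset)=(\emptyset,\{C,D\},\emptyset)\notin S$), and it does not contain the object you want to compute: since $A_0(l)$ and $A_\first(l)$ equal $\bot$ away from entry points, any solution assigns $\bot$ to program points with no incoming edges, so the least solution itself in general lies outside $S$; restricting the operator to $S$ therefore changes the problem. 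Second, proving the invariant ``along the Kleene chain'' fails at stage $0$, since the chain starts at the all-$\bot$ map and keeps $\bot$ at not-yet-reached points at every finite stage; worse, branch $1$ applied to a $\bot$ input produces $F_\callA(\bot,b)=(\may(b),\Pow_p(\class),\Pow_p(\fields))$, a non-$\bot$ iterate that again violates $\must\subseteq\may$, so the violation propagates rather than disappears after one step. What is actually needed is a weaker, stratified and \emph{global} invariant --- one that treats the $\bot$-like values ($\must=\Pow_p(\class)$) specially and, for informative values, couples distinct program points, along the lines of: if $C\in\must(A_\inA(l))$ then $\may(A_\inA(C.\clinit.\last))\subseteq\may(A_\inA(l))$ --- and establishing its preservation is a genuine piece of work whose delicate case is exactly a point whose input jumps from $\bot$ into branch $3$ while the $\may$-set at $C.\clinit.\last$ is still growing. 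As it stands, your proof identifies the right obstacle (one the paper silently skips) but does not yet overcome it.
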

\begin{proof}
  This is consequence of the facts that each equation is monotone,
  there is a finite number of program points in $p$ and
  $(A^\sharp,\sqsubseteq,\sqcup,\sqcap)$ is a finite lattice.
\end{proof}

\begin{theorem}[Soundness]
  If $(A_\inA,A_\outA)$ is a data flow solution then for all reachable
  states $\st{i,\cs,s,\hist}\in\ia{p}$, $A_\inA(i)\sim(s,\hist)$
  holds.
\end{theorem}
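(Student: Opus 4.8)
The plan is to prove a stronger statement by induction on the length of the derivation $\st{m_0.\first,\nil,\lambda f.\undef,\emptyset}\stepstar\st{i,\cs,s,\hist}$, since the stated invariant $A_\inA(i)\sim(s,\hist)$ is not by itself inductive: when a method returns we jump back into a caller's frame and must already know that the \emph{pending} call contexts recorded in $\cs$ are soundly approximated. Before setting up the induction I would dispatch two routine preliminary lemmas. First, a \emph{monotonicity} lemma: every step $\st{l,\cs,s,\hist}\step\st{l',\cs',s',\hist'}$ satisfies $\hist\subseteq\hist'$ and $\{f\mid s(f)\neq\undef\}\subseteq\{f\mid s'(f)\neq\undef\}$ (a field is set only to a genuine $v\in\Value$ by $\putS$, never back to $\undef$), so along any execution the history and the set of defined fields only grow. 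Second, an \emph{upward-closure} lemma for $\sim$: if $a_1\sqsubseteq a_2$ and $a_1\sim(s,\hist)$ then $a_2\sim(s,\hist)$, immediate from the definitions of $\sqsubseteq$ and $\sim$; this is what makes the joins in $A_\inA$ and $A_\first$ harmless, since to establish $A_\inA(i)\sim(s,\hist)$ it suffices to exhibit one lower bound of $A_\inA(i)$ that is correct.

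The crux is choosing the invariant attached to the call stack. For a state $\st{l,\cs,s,\hist}$ I would require (a) $A_\inA(l)\sim(s,\hist)$, and (b) a predicate $\mathit{StackInv}(\cs,s,\hist)$ asserting, for each frame of $\cs$ with stored point $l'$: if it is a standard frame, that $F^\init(l',A_\inA(l'))$ has $\must\subseteq\hist$ and $\wf\subseteq\{f\mid s(f)\neq\undef\}$; if it is a marked frame $\marked{l'}$ with $\clinitflow(l')=C$, that $A_\inA(l')$ has $\must\subseteq\hist$ and $\wf\subseteq\{f\mid s(f)\neq\undef\}$, together with $C\notin\must$ of $A_\inA(l')$. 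These are exactly the \emph{must}/$\wf$ facts consumed at a return; by monotonicity each is preserved by every subsequent step once it holds, while it is \emph{established} the moment the frame is pushed, because (b) then follows from (a) at the call site. This is precisely the design rationale of $F_\callA$: at a standard return to the intra-successor $l''$ of a call site $l'$ one has $A_\inA(l'')\sqsupseteq A_\outA(l')=F_\callA(F^\init(l',A_\inA(l')),\bigsqcup\{A_\inA(m.\last)\mid\interflow(l',m)\})$, whose $\may$ component is taken from the callee's exit facts — correct for the \emph{current} $(s,\hist)$ by the induction hypothesis applied to the shorter sub-derivation ending just before the return — whereas its $\must$ and $\wf$ components are recovered from the caller via clause (b).

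I would then check that the invariant is preserved by each rule of Fig.~\ref{fig:sem}. The initialization rule has $\NeedInit(l,C,\hist)$, hence $C\notin\hist$, which with $A_\inA(l)\sim(s,\hist)$ forces $C\notin\must$ of $A_\inA(l)$, so $F^\init_\callA$ lands in its second case and $A_\first(C.\clinit.\first)$ is correct for $(s,\hist\cup\{C\})$; the same step establishes clause (b) for the freshly pushed marked frame. The $\putS$, $\any$ and $\invoke$ steps reduce, via upward closure, to showing $F^\init(l,A_\inA(l))\sim(s,\hist)$, which follows from $A_\inA(l)\sim(s,\hist)$ once one observes that $\neg\NeedInit$ forces any $\clinitflow(l)=C$ to satisfy $C\in\hist\subseteq\may$, so that $F^\init$ never enters its first (and only potentially unsound) case. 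The standard return is discharged as described above. The delicate case is the marked-frame return, which \emph{resumes} the pending instruction at $l'$: here I would use the induction hypothesis on the sub-derivation ending at the clinit's exit to obtain $A_\inA(C.\clinit.\last)\sim(s,\hist)$ (propagated along $\intraflow$ to $C.\clinit.\last$), and combine it with clause (b) — in particular $C\notin\must$, which rules out the spurious branch of $F^\init$ — to prove $F^\init(l',A_\inA(l'))\sim(s,\hist)$, after which the successor state is handled like an ordinary instruction.

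The main obstacle is not any single computation but getting the strengthened invariant (b) right: it must be weak enough to be preserved under the growth of $\hist$ and of the defined fields (so that stale call contexts stay usable), yet strong enough to feed $F_\callA$ at standard returns and to survive the clinit-resume rule. The two observations that make it work — that $\sim$ constrains $\must$ and $\wf$ only by \emph{subset} conditions, which monotonicity preserves, and that $\NeedInit$ at a push forces $C\notin\must$ of the stored abstract state, excluding the unsound branch of $F^\init$ — are the real content; once isolated, the remaining case analysis over Fig.~\ref{fig:sem} is mechanical.
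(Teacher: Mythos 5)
Your proposal is correct, but it takes a genuinely different route from the paper's. The paper's own proof (given only as a sketch) first introduces an intermediate semantics $\leadsto$, shown equivalent to $\step$, in which a method call is a single big step: the caller jumps in one step to its intra-procedural successor, the callee's whole run being the transitive closure of $\leadsto$. Soundness is then a standard subject-reduction lemma relating $\leadsto$ and $\sim$, with no invariant on call stacks at all, because the stack has been compiled away into the nesting of big steps. You instead stay with the small-step relation $\step$ and strengthen the induction hypothesis with an explicit per-frame invariant --- monotone $\must$/$\wf$ facts for standard frames, plus $C\notin\must$ of the stored abstract state for marked frames --- established at push time, preserved by your monotonicity lemma, and consumed at the two return rules. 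Both proofs ultimately rest on the same semantic observations (the history and the set of defined fields only grow along execution; $\NeedInit$ at a push forces $C\notin\must$; $\neg\NeedInit$ together with invariant (a) excludes the first case of $F^\init$ at an executing instruction), but they package them differently: the paper's detour through $\leadsto$ confines all stack reasoning to a one-time equivalence proof, after which subject reduction is a per-instruction check in which the callee's exit facts are available simply because the callee's run is a sub-derivation of the big step; your approach avoids defining and proving equivalent an auxiliary semantics, at the price of designing the stack invariant and threading it through every case. Yours is more self-contained and elementary, the paper's more modular; your verification of the marked-frame return (where clause (b) excludes the third case of $F^\init$, whose stale $\may$ component could not absorb the grown history) is exactly the reasoning the paper's equivalence proof would have to hide. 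Two small points to tighten: state the base case explicitly ($A_0(m_0.\first)=(\emptyset,\emptyset,\emptyset)$ is correct for $(\lambda f.\undef,\emptyset)$, then apply upward closure), and record the auxiliary invariant that the current point of every frame is intra-procedurally reachable from its method's entry, since the well-formedness clause $(l,m.\last)\in\intraflow$ --- which you use to propagate facts from a $\return$ point to $m.\last$ --- is only guaranteed for reachable points.
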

\def\proofname{Proof sketch.}
\begin{proof}
  We first define an intermediate semantics $\leadsto$ which is shown
  equivalent to the small-step relation $\to$ but in which method
  calls are big-steps: for each point $l$ where a method $m$ is called
  we go in one step to the intra-procedural successor of $l$ using the
  result of the transitive closure of $\leadsto$.  Such a
  semi-big-step semantics is easier to reason with method calls.
  Once $\leadsto$ is define, we prove a standard subject reduction
  lemma between $\leadsto$ and $\sim$ and we conclude.
\end{proof}
\def\proofname{Proof.}

\section{Handling the Full Bytecode}
\label{sec:extensions}




\subsection{Exceptions}
\label{sec:exceptions}

From the point of view of this analysis, exceptions only change the
control flow graph.  As the control flow graph is computed separately,
it should not change the analysis herein described.

However, if we really need to be conservative, loads of instructions
may throw runtime exceptions (\texttt{IndexOutOfBoundException},
\texttt{NullPointerException}, etc.) or even errors
(\texttt{OutOfMemoryError}, etc.) and so there will be edges in the
control flow graph from most program points to the exit point of the
methods, making the analysis very imprecise.

There are several ways to improve the precision while safely handling
exceptions.  First, we can prove the absence of exception for some of
those (\emph{e.g.} see~\cite{hubert08-2:nonnull_annotation_inferencer}
to remove most \texttt{NullPointerException}s and~\cite{bodik00abcd}
to remove the \texttt{IndexOutOfBoundException}s you need to remove).
Then it is cheap to analyse, for each method, the context in which the
method is called, \emph{i.e.} the exceptions that may be caught if
they are thrown by the method: if there are no handler for some
exception in the context of a particular method, then there is no use
to take in account this exception in the control flow graph of this
method.  Indeed, if such an exception were thrown it would mean the
termination of the program execution so not taking in account the
exception may only add potential behaviours, which is safe.

\subsection{Inheritance}
\label{sec:class-hierarchy}
In the presence of a class hierarchy, the initialization of a class
starts by the initialization of its superclass if it has not been done
yet.  There is therefore an implicit edge in the control flow graph
from each \clinit{} method to the \clinit{} method of its superclass
(except for \texttt{Object.\clinit}).
Although it does not involve any challenging problem, the semantics
and the formalization need to be modified to introduce a new label at
the beginning of each \clinit{} method such that, if $l$ is the label
we introduce at the beginning of \texttt{C.\clinit}, then
$\clinitflow(l)=\textit{super}(C)$.

Note that it is not required to initialize the interfaces a class
implements, nor the super interfaces an interface extends
(cf.~Sect.~2.17.4 of~\cite{lindholm99:jvm_spec}).




\subsection{Field Initializers and Initialization order}
\label{sec:initialization-order}

Although the official JVM specification states (Sect.~2.17.4) that the
initialization of the superclass should be done before the
initialization of the current class and that the field initializers
are part of the initialization process, in Sun's JVM the field
initializers are used to set the corresponding fields before starting
the initialization of the superclass.  This changes the semantics
but removes potential defects as this way it is impossible for some
code to read the field before it has been set.

If the analysis is targeted to a such JVM implementation, depending on
the application of the analysis, the fields which have a field
initializer can be safely ignored when displaying the warnings found,
or be added to \wf{} either when the corresponding class is added to
$\must$ or at the very beginning ($m_o.\first$).  In order for the
analysis to be compatible with the official specification, the
analysis needs to simulate the initialization of the fields that have
a field initializer at the beginning of the class initializer, just
after the implicit call to the class initializer of its superclass.

\subsection{Reflection, User-Defined Class-Loader, Class-Path
  Modification, etc.}
\label{sec:reflection}
There are several contexts in which this analysis can be used: it can
be used to find bugs or to prove the correctness of some code, either
off-line or in a PCC~\cite{nec97} architecture.

In case it used to find bugs, the user mainly need to be aware that
those features are not supported.
\\
If it is used to prove at compile time the correctness of some code,
the analysis needs to handle those features.  In case of reflection,
user-defined class-loaders or modification of the class-path, it is
difficult to be sure that all the code that may be executed has been
analyzed.  The solution would restrict the features of the language in
order to be able to infer what code \emph{may} be executed, which is
an over-approximation of the code that \emph{will} be executed, and to
analyze this code.  For example, Livshits \emph{et al.} proposed
in~\cite{livshits05} an analysis to correctly handle reflection.

In a PCC architecture, the code is annotated at compile time and
checked at run time.  The issue is no more to find the code that will
be executed, because the checking is done at run time when it is a lot
easier to know what code will be executed, but to consider
\emph{enough} code when annotating.  This can be done the same way as
with off-line proofs and by asking the programmer when it is not
possible to infer a precise enough solution.  In this case, if the
user gives incorrect data the checker will notice it while checking
the proof at run time.

\section{Two Possible Applications for the Analysis}
\label{sec:two-possible-uses}

\subsection{Checking That Fields are Written Before Being Read}
\label{sec:fields-are-written-then-read}

The analysis presented in this paper computes a set of fields that
have been written for each program point.  To check that all fields
are written before being read, \emph{i.e.} that when a field
\texttt{C.f} is read at program point $l$, we need to check that
\texttt{C.f} is in the set of written field at this particular program
point.

\subsection{Nullness Analysis of Static Fields}
\label{sec:nullness-analysis}

While in our previous
work~\cite{hubert08-1:nonnull_annotations_inference} we choose to
assume no information about static fields, several tools have targeted
the analysis of static fields as part of their analysis but missed the
issue of the initialization herein discussed.

To safely handle static fields while improving the precision, we can
abstract every field by the abstraction of the values that may be
written to it and, if the static field may be read before being
initialized, then we add the abstraction of the \texttt{null} constant
to the abstraction of the field.  It is a straightforward extension of
the analysis presented
in~\cite{hubert08-1:nonnull_annotations_inference}.

\section{Related work}
\label{sec:related-work}

Kozen and Stillerman studied
in~\cite{kozen02:_eager_class_initial_for_java} eager class
initialization for Java bytecode and proposed a static analysis based
on fine grained circular dependencies to find an initialization order
for classes.  If their analysis finds an initialization order, then
our analysis will be able to prove all fields are initialized before
being read.  If their analysis finds a circular dependency, it fails
to find an initialization order and issues an error while our analysis
considers the initialization order implied by the main program and may
prove that all fields are written before being read.

Instance field initialization have been studied for different
purposes.  Some works are focused on null-ability properties such as
Fähndrich and Leino in~\cite{fahndrich03:_declar_and_check_non_null},
our work in~\cite{hubert08-1:nonnull_annotations_inference} or
Fähndrich and Xia
in~\cite{fahndrich07:object_invariants_delayed_types}.  Other work
have been focused on different properties such as Unkel and
Lam~\cite{unkel08:infererence_stationary_fields} who studied
stationary fields.  Instance field initialization offers different
challenges from the one of static fields: the initialization method is
explicitly called soon after the object allocation.

Several formalizations of the Java bytecode have been proposed that,
among other features, handled class initialization such as the work of
Debbabi \emph{et al.} in~\cite{FormalJVM} or Belblidia and Debbabi
in~\cite{FormalJVM2}.  Their work is focused on the dynamic semantics
of the Java bytecode while our work is focused on its analysis.

Böerger and Schulte~\cite{Borger98aprogrammer} propose another dynamic
semantics of Java.  They consider a subset of Java including
initialization, exceptions and threads.  They have
exhibited~\cite{borger00initialization} some weaknesses in the
initialization process as far as the threads are used.  They pointed
out that deadlocks could occur in such a situation.

Harrold and Soffa~\cite{harrold94:interprocedural_def_use} propose an
analysis to compute inter-procedural definition-use chains.  They have
not targeted the Java bytecode language and therefore neither the
class initialization problems we have faced but then, our analysis can
be seen as a lightweight inter-procedural definition-use analysis
where all definitions except the default one are merged.

Hirzel \emph{et
  al.}~\cite{hirzel04:pointer_analysis_dynamic_class_loading} propose
a pointer analysis that target dynamic class loading and lazy class
initialization.  There approach is to analyse the program at run time,
when the actual classes have been loaded and to update the data when a
new class is loaded and initialized.  Although it is not practical to
statically certify programs, a similar approach could certainly be
adapted to implement a checker in PCC architecture such as the one
evoked in Sect.~\ref{sec:reflection}.

\section{Conclusion and Future Work}
\label{sec:conclusion}
We have shown that class initialization is a complex mechanism and
that, although in most cases it works as excepted, in some more
complicated examples it can be complex to understand in which
order the code will be executed.  More specifically, some fields may
be read before being initialized, despite being initialized in their
corresponding class initialization methods.
A sound analysis may need to address this problem to infer precise and
correct information about the content of static fields.
We have proposed an analysis to identify the static fields that may be
read before being initialized and shown how this analysis can be used
to infer more precise information about static fields in a sound
null-pointer analysis.

We expect the analysis to be very precise if the control flow graph is
\emph{accurate enough}, but we would need to implement this analysis
to evaluate the precision needed for the control flow graph.

\bibliographystyle{entcs} \bibliography{biblio}
\end{document}